\newtheoremstyle{mythmstyle}
  {}
  {}
  {\slshape}
  {}
  {\bfseries}
  {.}
  {.5em}
  {}
\theoremstyle{mythmstyle}
\newtheorem{theorem}{Theorem}
\newtheorem{definition}[theorem]{Definition}
\newtheorem{lemma}[theorem]{Lemma}
\newcommand\dist[1]{\langle#1\rangle}
\newcommand\bfR{\mathbf{R}}
\newcommand\ap{\overrightarrow{p}}
\title{Peres-Style Recursive Algorithms}
\author{Sung-il Pae\\Hongik University\\Seoul, Korea\\email: pae@hongik.ac.kr}
\begin{document}
\maketitle
  
\begin{abstract}
\noindent
Peres algorithm applies the famous von Neumann's trick recursively to produce
unbiased random bits from biased coin tosses.  Its recursive nature makes the
algorithm simple and elegant, and yet its output rate approaches the
information-theoretic upper bound.  However, it is relatively hard to explain
why it works, and it appears partly due to this difficulty that its
generalization to many-valued source was discovered only recently.
Binarization tree provides a new conceptual tool to understand the
innerworkings of the original Peres algorithm and the recently-found
generalizations in both aspects of the uniform random number generation and
asymptotic optimality.  Furthermore, it facilitates finding many new
Peres-style recursive algorithms that have been arguably very hard to come by
without this new tool.
\end{abstract}
\section{Introduction}
Given a coin that turns heads (denoted by 0) with probability $p$, thus the
probability of turning tails (1) being $q=1-p$, the von Neumann's trick takes
two coin flips and returns output by the following rule \cite{VonNeumann51}:
\begin{equation}\label{eq:vNrule}
00\mapsto \lambda,\; 01\mapsto 0,\; 10\mapsto 1,\; 11\mapsto \lambda,
\end{equation}
where $\lambda$ indicates ``no output.''  Because $\Pr(01)=\Pr(10)=pq$, the
resulting bit is unbiased.  By repeating this process, we obtain a sequence
of random bits, and the {\em output rate\/}, the average number of output per
input, is $pq\le 1/4$. (See, for example, exercise 5.1-3 in
\cite{stein2001introduction})

Formalizing this idea~\cite{elias72,peres92,pae06-randomizing,pae-loui05}, an
{\em extracting function} $f\colon\{0,1\}^n \rightarrow \{0,1\}^*$ takes $n$
independent bits of bias $p$, called Bernoulli source of bias $p$, and
returns independent and unbiased random bits, and its output rate is bounded
by the Shannon entropy $H(p)=-(p\lg p+q\lg q)$.  When $p=1/3$, the output
rate of von Neumann's procedure is $pq=2/9\approx 0.22$ while the entropy
bound $H(1/3)\approx0.92$; the discrepancy is quite large.  But there are
{\em asymptotically optimal\/} extracting functions that achieve rates
arbitrarily close to the entropy bound.

Consider the functions defined on $\{0,1\}^2$ as follows, where $\Psi_1$ is
the {\em von Neumann} function defined by the rule (\ref{eq:vNrule}):
\begin{equation}\label{eq:peres-table}
\begin{tabular}{c|c|c|c|c}
\hline
$x$& $\Pr(x)$&$\Psi_1(x)$&$u(x)$ & $v(x)$\\
\hline
00&$p^2$&$\lambda$&0&0\\
01&$pq$&0&1&$\lambda$\\
10&$pq$&1&1&$\lambda$\\
11&$q^2$&$\lambda$&0&1\\
\hline
\end{tabular}\bigskip
\end{equation}
Extend the three functions $\Psi_1$, $u$, and $v$ to $\{0,1\}^*$: for an
empty string,
\[
\Psi_1(\lambda)=u(\lambda)=v(\lambda)=\lambda,
\]
for a nonempty even-length input, define (and the same for $u$ and $v$)
\begin{equation}\label{eq:vN-ext}
\Psi_1(x_1x_2\dots x_{2n})=\Psi_1(x_1x_2)*\cdots*\Psi_1(x_{2n-1}x_{2n}),
\end{equation}
where $*$ is concatenation, and for an odd-length input, drop the last
bit and take the remaining even-length bits.  

Now, define {\em Peres function\/} $\Psi:\{0,1\}^*\to\{0,1\}^*$ by a
recursion
\begin{equation}\label{eq:peres-def}
\left\{
\begin{split}
\Psi(x) &= \Psi_1(x) * \Psi(u(x)) * \Psi(v(x)),\\
\Psi(\lambda) &=\lambda.
\end{split}
\right.
\end{equation}
This simple recursive function is extracting for each input length, and,
rather surprisingly, asymptotically optimal~\cite{peres92}.  Its
implementation is straightforward and runs very fast, in $O(n\log n)$ time
for input length $n$, with a small footprint.  Another superiority over other
asymptotically optimal extracting algorithms, for example, Elias
algorithm~\cite{elias72}, is its uniformity.  To achieve the asymptotic
optimality, these algorithms need to take increasingly longer inputs.  While
Peres algorithm does this with the same simple fixed function $\Psi$, Elias
algorithm need to compute separately for each input length with increasing
complexity.

However, it appears harder to explain why Peres algorithm works than Elias
algorithm, and it is quite tempting to say that the algorithm works almost
like a magic because of its simplicity of definition and complexity of
justification.  So a natural question is whether we can find similar
recursively defined extracting functions.
But the question had remained elusive for a while and it was only recent
that its generalizations to many-valued source were discovered~\cite{pae15}.

By {\em Peres-style\/} recursion we mean a recursion of the style
\[
\Psi(x)=\Psi_1(x)*\Psi(u_1(x))*\dots*\Psi(u_l(x)),
\]
which defines an asymptotically optimal extracting function, where $\Psi_1$
is extracting and $u_1,\dots,u_l$ are auxiliary functions defined on a fixed
finite-length inputs and extended to arbitrary-length inputs as
in~\eqref{eq:vN-ext}.  If one or more auxiliary functions are omitted, then
the resulting recursive function is still extracting but not asymptotically
optimal anymore.  We always call the base function of the recursion von
Neumann function and write as $\Psi_1$.

We report a new way to understand and justify Peres-style recursive
algorithms using a recent development of {\em binarization
  tree\/}~\cite{DBLP:conf/isit/Pae16}.  It provides a simple and unified
viewpoint that explains innerworkings of the Peres algorithm and its
recently-found generalizations.  Furthermore, we report new Peres-style
recursive algorithms that have been arguably hard to come by without this new
tool.  

\section{Binarization Tree and Peres Algorithm}
Summarized below are necessary backgrounds on extracting functions and
binarization trees.  In particular, for a binarization tree, we give
``structure lemma'' and ``entropy lemma.''  The entropy lemma is also known
as the ``leaf entropy theorem,'' (see, for example, Section 6.2.2. of
\cite{Knuth:art3}) and it is mainly related to the asymptotic optimality of
Peres-style recursive algorithms defined by a binarization tree.  The
structure lemma was first discussed in~\cite{DBLP:conf/isit/Pae16}, and, in
our context, it is used to show our algorithms are extracting.  For more
rigorous treatments on the subjects, see
\cite{pae06-randomizing,pae15,DBLP:conf/isit/Pae16}.  Then, using these new
tools, we give a proof that the original Peres algorithm is extracting and
asymptotically optimal.

\subsection{Extracting Functions}
\begin{definition}[\cite{peres92,pae06-randomizing}]\label{def:extracting}
A function $f\colon\{0,1,\dots,m-1\}^{n}\to \{0,1\}^\ast$ is {\em
  $m$-extracting} if for each pair $z_1,z_2$ in $\{0,1\}^\ast$ such that
$|z_1|=|z_2|$, we have $\Pr(f(x)=z_1)=\Pr(f(x)=z_2)$, regardless of
the distribution $\dist{p_0,\dots,p_{m-1}}$.
\end{definition}
\noindent
Denote by $S_{(n_0,n_1,\dots,n_{m-1})}$ the subset of
$\{0,1,\dots,m-1\}^{n}$ that consists of strings with $n_i$ $i$'s.
Then
\[
\{0,1,\dots,m-1\}^{n}=\bigcup_{n_0+n_1+\dots+n_{m-1}=n} S_{(n_0,n_1,\dots,n_{m-1})},
\]
and each $S_{(n_0,n_1,\dots,n_{m-1})}$ is an {\em equiprobable\/} subset of
elements whose probability of occurrence is $p_0^{n_0}p_1^{n_1}\cdots
p_{m-1}^{n_{m-1}}$.
When $m=2$, an equiprobable set $S_{(l,k)}$ is also written as $S_{n,k}$,
where $n=l+k$, and its size can also be written as an equivalent binomial
coefficient as well as the multinomial one:
\[
{n\choose k}={n\choose{l,k}}.
\]

An equivalent condition for a function to be extracting is that it sends
equiprobable sets to multiple copies of $\{0,1\}^N$, the
exact full set of binary strings of various lengths $N$'s.  
 For example, Table~\ref{tbl:example} shows how von Neumann function and
 Peres function sends equiprobable sets to such sets.
\begin{table}[h]\label{tbl:example}
\begin{center}
\begin{tabular}{c|c|c|c|c||c|c}
\hline
$k$ ($n=6$)& $\Pr(x)$ &$|S_{n,k}|$ & von Neumann ($\Psi_1$)& bits& Peres
($\Psi$) & bits\\
\hline
$k=0$ & $p^6$ & 1 & $\{\lambda\}$ &0& $\{\lambda\}$ &0\\
$k=1$ & $p^5q$ & 6& $3\cdot\{0,1\}$ &6& $\{0,1\},\{0,1\}^2$&10 \\
$k=2$ & $p^4q^2$&15 &$3\cdot\{\lambda\},\, 3\cdot\{0,1\}^2$ & 24&
   $\{\lambda\},\, \{0,1\},\, \{0,1\}^2,\, \{0,1\}^2$&34\\
$k=3$&$p^3q^3$&20&$6\cdot\{0,1\},\,\{0,1\}^3$&28&$\{0,1\}^2,\,3\cdot\{0,1\}^3$&56\\
$k=4$ & $p^2q^4$&15 &$3\cdot\{\lambda\},\, 3\cdot\{0,1\}^2$ &24&
   $\{\lambda\},\, \{0,1\},\, \{0,1\}^2,\, \{0,1\}^2$ &34 \\
$k=5$ & $pq^5$ & 6& $3\cdot\{0,1\}$ &6& $\{0,1\},\{0,1\}^2$ &10\\
$k=6$ & $q^6$ & 1 & $\{\lambda\}$ &0& $\{\lambda\}$&0 \\
\hline
\end{tabular}
\caption{Multiset images of equiprobable sets under extracting functions}
\end{center}
\end{table}

\begin{definition}[\cite{pae15}]
A multiset $A$ of bit strings is {\em extracting} if, for each $z$ that
occurs in $A$, all the bit strings of length $|z|$ occur in $A$ the same time
as $z$ occurs in $A$.
\end{definition}
\begin{lemma}[\cite{pae15}]\label{lemma:extracting-multiset}
A function $f\colon\{0,1,\dots,m-1\}^{n}\to \{0,1\}^\ast$ is extracting if
and only if its multiset image of each equiprobable set
$S_{(n_0,n_1,\dots,n_{m-1})}$ is extracting.
\end{lemma}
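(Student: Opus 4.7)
The plan is to expand $\Pr(f(x) = z)$ as a polynomial in $p_0, \dots, p_{m-1}$ grouped by the equiprobable classes $S_{(n_0,\dots,n_{m-1})}$, so that the defining condition for $f$ being extracting translates cleanly into a condition on the multiset images of those classes. Writing $c_z(n_0, \dots, n_{m-1})$ for the multiplicity of $z$ in $f(S_{(n_0,\dots,n_{m-1})})$ and using that every element of $S_{(n_0,\dots,n_{m-1})}$ has probability $p_0^{n_0}\cdots p_{m-1}^{n_{m-1}}$, I get
\[
\Pr(f(x) = z) = \sum_{n_0 + \dots + n_{m-1} = n} c_z(n_0, \dots, n_{m-1})\, p_0^{n_0} \cdots p_{m-1}^{n_{m-1}},
\]
a homogeneous polynomial of degree $n$ in the $p_i$'s whose coefficients are exactly the multiplicities of $z$ in the images of the equiprobable sets.

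The $(\Leftarrow)$ direction is then immediate: if every multiset image is extracting, then $c_{z_1}(n_0, \dots, n_{m-1}) = c_{z_2}(n_0, \dots, n_{m-1})$ whenever $|z_1| = |z_2|$, so the two polynomials above agree term by term and hence $\Pr(f(x)=z_1) = \Pr(f(x)=z_2)$ under every distribution.

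For the $(\Rightarrow)$ direction, suppose $f$ is extracting and fix $z_1, z_2$ with $|z_1| = |z_2|$. The two polynomials then agree as functions on the probability simplex $\{(p_0, \dots, p_{m-1}) : p_i \geq 0,\ \sum_i p_i = 1\}$. Since both are homogeneous of degree $n$, agreement on the relatively open interior of the simplex forces equality as polynomials in $p_0, \dots, p_{m-1}$, so their coefficients match: $c_{z_1}(n_0, \dots, n_{m-1}) = c_{z_2}(n_0, \dots, n_{m-1})$ for every tuple, which is exactly the statement that each multiset image is extracting.

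The main subtlety I anticipate is this last step---passing from equality of probabilities on the simplex to equality of polynomial coefficients. I would handle it either by substituting $p_{m-1} = 1 - p_0 - \dots - p_{m-2}$ and invoking the fact that a polynomial vanishing on a nonempty open subset of $\bfR^{m-1}$ is the zero polynomial, or, equivalently, by using homogeneity to extend the agreement from the simplex to the whole positive cone and then to all of $\bfR^m$. Everything else is bookkeeping on the probability expansion.
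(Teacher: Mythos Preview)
The paper does not actually prove this lemma; it is stated with a citation to \cite{pae15} and used as background. So there is no in-paper argument to compare against.

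Your proof is correct and is the standard one: expand $\Pr(f(x)=z)$ over the equiprobable classes, observe the coefficients are the multiplicities $c_z(n_0,\dots,n_{m-1})$, and then use that distinct homogeneous degree-$n$ monomials in $p_0,\dots,p_{m-1}$ are linearly independent as functions on the open simplex. Of the two routes you sketch for the $(\Rightarrow)$ direction, the homogeneity/scaling argument is the cleaner one to present: it directly gives equality of the two polynomials on an open subset of $\bfR^m$, hence equality of all coefficients. The substitution route also works, but as written it needs one extra sentence---after concluding the difference vanishes on the affine hyperplane $\sum_i p_i = 1$, you must invoke homogeneity again (or the linear independence of the multivariate Bernstein basis) to recover equality of the original coefficients, so you may as well lead with the scaling argument.
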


\subsection{Binarization Tree}
Let $X$ be a random variable over $\{0, 1,\dots, m-1\}$ (or, rather, a dice
with $m$ faces) with probability distribution $\dist{p_0,\dots,p_{m-1}}$.  A
sequence $x=x_1\dots x_n\in\{0,1,\dots,m-1\}^{n}$ is considered to be taken
$n$ times from $X$.

Given a function $\phi\colon\{0,1,\dots,m-1\}\to\{\lambda,0,1,\dots,k-1\}$,
the random variable $\phi(X)$ has distribution $\dist{\pi_0,\dots,\pi_{k-1}}$,
where
\[
\pi_0=\sum_{\phi(i)=0}p_i/s,\,\dots,\,\pi_{k-1}=\sum_{\Phi(i)=k-1}p_i/s,\,\text{and}\;
 s=\sum_{\phi(i)\not=\lambda}p_i.
\]
Extend $\phi$ to $\{0,1,\dots,m-1\}^n$, by letting, for $x=x_1\dots x_n$,
$\phi(x)=\phi(x_1)*\dots*\phi(x_n)$.  Then, for an equiprobable set
$S=S_{(n_0,\dots,n_{m-1})}$, its image under $\phi$ is equiprobable, that is,
\[
\phi(S)=S_{(l_0,\dots,l_{k-1})},
\]
where
\[
l_0=\sum_{\phi(i)=0}n_i,\dots, l_{k-1}=\sum_{\phi(i)=k-1}n_i.
\]

\newcommand{\leaf}{\mathrm{leaf}}
Consider a tree with $m$ external nodes labeled uniquely with
$0,1,\dots,m-1$.  For an internal node $v$ of degree $k$, define a function
$\phi_v\colon \{0,1,\dots,m-1\}\to\{\lambda,0,1,\dots,k-1\}$ as follows:
\[
\phi_v(x)=\left\{\begin{array}{ll}
  i,&\text{if $x\in\leaf_i(v)$, for $i=0,\dots,k-1$,}\\
  \lambda,&\text{otherwise.}
 \end{array}\right.
\]
where $\leaf_i(v)$ is the set of external nodes on the $i$th subtree of $v$.
When $X$ is also an $m$-valued source, call such a tree an {\em
  $m$-binarization tree\/} over $X$ and $\phi_v$ its {\em component
  function.}

For example, the following tree with 10 external nodes
\begin{equation}\label{eq:example-tree}
\raisebox{-.5\height}{\includegraphics[scale=1]{figs-31.mps}}
\end{equation}
defines the following component functions:
\begin{equation}\label{eq:yet-another-bin-eg}
\medskip
\begin{tabular}{c|c|c|c|c|c}
\hline
$x$& $\Phi_1(x)$&$\Phi_2(x)$ & $\Phi_3(x)$ & $\Phi_4(x)$ & $\Phi_5(x)$\\
\hline
0&2&$\lambda$&0&1&1\\
1&2&$\lambda$&0&0&$\lambda$\\
2&0&0&$\lambda$&$\lambda$&$\lambda$\\
3&2&$\lambda$&2&$\lambda$&$\lambda$\\
4&2&$\lambda$&0&1&0\\
5&0&1&$\lambda$&$\lambda$&$\lambda$\\
6&1&$\lambda$&$\lambda$&$\lambda$&$\lambda$\\
7&2&$\lambda$&1&$\lambda$&$\lambda$\\
8&2&$\lambda$&0&1&2\\
9&2&$\lambda$&0&1&3\\
\hline
\end{tabular}
\end{equation}  
For $S=S_{(n_0,n_1,\dots,n_{m-1})}$, we have
\begin{align*}
\Phi_1(S)&=S_{(n_2+n_5,n_6,n_0+n_1+n_3+n_4+n_7+n_8+n_9)},\\
\Phi_2(S)&=S_{(n_2,n_5)},\\
\Phi_3(S)&=S_{(n_0+n_1+n_4+n_8+n_9,n_7,n_3)},\\
\Phi_4(S)&=S_{(n_1,n_0,n_4,n_8,n_9)},\\
\Phi_5(S)&=S_{(n_4,n_0,n_8,n_9)},
\end{align*}
and the sizes $|S|$ and $|\Phi_i(S)|$ satisfy
\begin{align*}
|S|&={n\choose{n_0,\dots,n_9}}\\
   &={n\choose{n_2+n_5,n_6,n_0+n_1+n_3+n_4+n_7+n_8+n_9}}
     {n_2+n_5\choose{n_2,n_5}}\dots
     {n_4+n_0+n_8+n_9\choose{n_4,n_0,n_8,n_9}}\\
   &=\prod|\Phi_i(S)|.
\end{align*}
In fact, we have a stronger claim.  A proof is given in Appendix.
\begin{lemma}[Structure Lemma]\label{lemma:structure}
Let $\Phi=\{\Phi_1,\dots,\Phi_{M}\}$ be the set of component functions
defined by an $m$-binarization tree.  Then the mapping
$\Phi\colon x\mapsto \Phi(x)= (\Phi_1(x),\dots,\Phi_{M}(x))$ gives a
one-to-one correspondence between an equiprobable subset
$S=S_{(n_0,n_1,\dots,n_{m-1})}$ and
$\Phi_1(S)\times\cdots\times\Phi_{M}(S)$, the Cartesian product of
equiprobable sets $\Phi_j(S)$'s.
\end{lemma}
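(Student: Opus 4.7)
The plan is to prove the Structure Lemma by induction on the number of internal nodes of the binarization tree, exploiting the tree's natural recursive decomposition into a root and subtrees. For the base case I would take a tree with a single internal node---the root $r$ with $m$ external children---so there is only one component function $\Phi_1=\phi_r$, which acts as a bijective relabeling of $\{0,\dots,m-1\}$; the claim that $\Phi=\Phi_1\colon S\to\Phi_1(S)$ is a bijection then holds by inspection.

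For the inductive step I would let the root $r$ have children subtrees $T_0,\dots,T_{k-1}$ and partition the component functions into $\Phi_1=\phi_r$ together with groups $\Phi^{(i)}_1,\dots,\Phi^{(i)}_{M_i}$ inherited from each $T_i$. The key observation driving the step is that, by definition, $\Phi^{(i)}_j$ sends any symbol outside the leaves of $T_i$ to $\lambda$, and so $\Phi^{(i)}_j(x)=\Phi^{(i)}_j(x^{(i)})$, where $x^{(i)}$ denotes the subsequence of $x$ at positions whose symbol is a leaf of $T_i$. The placement of these subsequences within $x$ is encoded exactly by $\Phi_1(x)$, which is a $\lambda$-free string in $\{0,\dots,k-1\}^n$.

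Given this, the inversion procedure writes itself: from $\Phi(x)$, read $\Phi_1(x)$ to locate the positions of each $x^{(i)}$; invoke the inductive hypothesis on $T_i$ to recover $x^{(i)}$ from $(\Phi^{(i)}_1(x),\dots,\Phi^{(i)}_{M_i}(x))$; then interleave the reconstructed $x^{(i)}$'s back into $x$ according to $\Phi_1(x)$. Applied to any tuple in the Cartesian product $\Phi_1(S)\times\cdots\times\Phi_M(S)$, the same procedure produces a unique preimage whose symbol counts are forced to match $(n_0,\dots,n_{m-1})$, since $\Phi_1(S)$ partitions the total counts among the subtrees and the inductive hypothesis pins down the internal counts within each subtree.

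The main obstacle I expect is the bookkeeping across levels---specifically, identifying the equiprobable set $\Phi^{(i)}_j(S)$ appearing in the full-tree statement with the equiprobable set $\Phi^{(i)}_j(S^{(i)})$ required to invoke the inductive hypothesis on $T_i$. Once this identification is in hand, which is essentially the content of the key observation above combined with tracking symbol counts through $\Phi_1$, both injectivity and surjectivity follow from the recursive reconstruction without further difficulty.
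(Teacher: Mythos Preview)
Your proposal is correct and follows essentially the same approach as the paper's proof: both argue by induction on the tree structure, use the root's component function $\phi_r$ to encode the interleaving of the subtree-restricted substrings $x^{(i)}$, and reconstruct $x$ by re-interleaving these substrings according to $\phi_r(x)$. The only cosmetic difference is the choice of base case---the paper starts from a single leaf (where $S_T$ is a singleton), while you start one level up at a tree with a single internal node; just make sure your inductive step handles the case where some child $T_i$ is itself a leaf (so $M_i=0$ and $x^{(i)}$ is trivially a constant string).
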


For a node $v$ of a binarization tree $T$ and its degree is $k$, let
\[
P(v)=\sum_{i\in\leaf(v)}p_i,
\]
where $\leaf(v)=\bigcup_{i=0,\dots,k-1} \leaf_i(v)$, and let
\begin{align*}
\pi_0(v)&=\sum_{i\in\leaf_0(v)}p_i/P(v),\\
  &\vdots\\
\pi_{k-1}(v)&=\sum_{i\in\leaf_{k-1}(v)}p_i/P(v),
\end{align*}
so that $\phi_v(X)$ has the distribution
$\pi(v)=\dist{\pi_0(v),\dots,\pi_{k-1}(v)}$.  Then we have the following
lemma that is also well-known as the leaf entropy theorem.  See, for example,
Lemma E in Section 6.2.2 of~\cite{Knuth:art3}.
\begin{lemma}[Entropy Lemma]\label{lemma:entropy}
\[
H(X)=\sum_{v\in T}P(v)H(\pi(v)).
\]
\end{lemma}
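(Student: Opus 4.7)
The plan is to proceed by induction on the number of internal nodes of $T$. In the base case, $T$ has a single internal node, the root $r$, whose $m$ external children carry probabilities $p_0,\dots,p_{m-1}$; then $P(r)=1$ and $\pi(r)=\dist{p_0,\dots,p_{m-1}}$, so the right-hand side equals $H(X)$ trivially.

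For the inductive step, I would pick a ``deepest'' internal node $v$, i.e., one whose children are all external, and form a smaller tree $T'$ by deleting the children of $v$ and promoting $v$ itself to an external node labeled $i^{\ast}$ carrying probability $p_{i^\ast}:=P(v)$. Let $X'$ denote the resulting source on the reduced alphabet. The key identity is
\[
H(X)=H(X')+P(v)\,H(\pi(v)).
\]
I would verify this by a direct computation: if $i_0,\dots,i_{k-1}$ are the children of $v$, then $p_{i_j}=P(v)\pi_j(v)$ and $\log p_{i_j}=\log P(v)+\log\pi_j(v)$, so the children's contribution to $-\sum_i p_i\log p_i$ splits as $-P(v)\log P(v)+P(v)H(\pi(v))$. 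The $-P(v)\log P(v)$ term is exactly what $i^\ast$ contributes to $H(X')$, while the remaining leaves contribute identically to $H(X)$ and $H(X')$.

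To apply the induction hypothesis to $T'$, I would observe that every internal node $u\ne v$ of $T$ survives in $T'$ with identical values of $P(u)$ and $\pi(u)$. This is immediate from additivity of $P$: any ancestor of $v$ in $T$ now has $v$'s children replaced by the single leaf $i^{\ast}$, which preserves the subtree sum because $p_{i^\ast}=P(v)=\sum_j p_{i_j}$; nodes with subtrees disjoint from $v$ are untouched. Hence by the induction hypothesis $H(X')=\sum_{u\in T,\,u\ne v}P(u)H(\pi(u))$, and adding $P(v)H(\pi(v))$ finishes the proof.

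The computation is entirely elementary and the identity $H(X)=H(X')+P(v)H(\pi(v))$ is the standard grouping axiom of entropy. The only mildly delicate point is verifying that $P$ and $\pi$ on the surviving nodes are preserved under the collapse, but this is pure bookkeeping; there is no real obstacle beyond writing it carefully.
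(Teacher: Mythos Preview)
Your proof is correct. The paper does not actually prove this lemma in the body of the text---it cites Knuth's leaf entropy theorem---but in the appendix it indicates the intended argument: a top-down induction on subtrees via the recursion
\[
H(X_T)=H(\pi)+\pi_0 H(X_{T_0})+\dots+\pi_{d-1}H(X_{T_{d-1}}),
\]
where $T_0,\dots,T_{d-1}$ are the subtrees hanging off the root of $T$. You instead argue bottom-up, collapsing a deepest internal node and invoking the grouping identity $H(X)=H(X')+P(v)H(\pi(v))$. The two inductions are dual and equally elementary: the paper's top-down version dovetails with its proof of the Structure Lemma (which is why the recursion is displayed there), while your bottom-up version isolates the grouping axiom of entropy more transparently. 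Neither approach has a real advantage over the other.
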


\subsection{Peres Algorithm Revisited}
Let $Y$ be a Bernoulli random variable with distribution $\dist{p,q}$.
Consider the following binarization tree over $X=Y^2$, the random variable
with values $\{0,1\}^2=\{00,01,10,11\}$ and distribution
$\dist{p^2,pq,pq,q^2}$, and:
\begin{equation}\label{eq:original-peres-tree}
\raisebox{-.5\height}{\includegraphics[scale=1]{figs-11.mps}}
\end{equation}
Then the component functions $\{u,v,\Psi_1\}$ defined by this binarization
tree are exactly the same as those of Peres algorithm given in
\eqref{eq:peres-table}!

\begin{theorem}\label{thm:peres-extracting}
Peres function $\Psi$ is extracting.
\end{theorem}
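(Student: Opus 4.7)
My plan is to induct on the input length $m$ and use Lemma~\ref{lemma:extracting-multiset} to reduce the theorem to showing that the multiset $\Psi(S_{m,k})$ is extracting for every $k$. Two simple facts about extracting multisets will be invoked without much fuss: a multiset union of extracting multisets is extracting, and the elementwise concatenation $A*B$ of two extracting multisets is extracting (the count of a length-$L$ string in $A*B$ splits as $\sum_{L_1+L_2=L} c_A(L_1)\,c_B(L_2)$, which depends only on $L$ once $A$ and $B$ are extracting). The base cases $m=0,1$ are trivial, and the odd case reduces to the even case by the observation that $\Psi(S_{m,k}) = \Psi(S_{m-1,k}) \cup \Psi(S_{m-1,k-1})$, since the Peres recursion drops the last bit when the length is odd.

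So the real work is the inductive step for $m=2n$. The natural idea is to apply the Structure Lemma (Lemma~\ref{lemma:structure}) to the binarization tree in~\eqref{eq:original-peres-tree}, whose component functions are exactly $\Psi_1$, $u$, and $v$. A first attempt would be to apply the lemma to each individual $4$-valued equiprobable piece $S_{(a_{00},a_{01},a_{10},a_{11})} \subseteq S_{2n,k}$. However, this approach fails: the Structure Lemma identifies such a piece with $S_{a_{01}+a_{10},\,a_{10}} \times S_{n,\,a_{01}+a_{10}} \times S_{a_{00}+a_{11},\,a_{11}}$, but the first factor is not an extracting multiset in its own right, as it only contains strings of a single Hamming weight. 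So individual $4$-valued pieces do not map to extracting multisets, and finding the right coarsening is the main obstacle.

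The remedy is to group $4$-valued pieces by the total number $d := a_{01} + a_{10}$ of ``different'' pairs. Since $k$ is fixed, this pins down $a_{11} = (k-d)/2$ and $a_{00} = n - (k+d)/2$, leaving only the split $(a_{01},a_{10})$ free. Writing $S_d$ for the union of pieces with the common value $d$, applying the Structure Lemma on each piece and summing over the free index yields a bijection
\[
S_d \;\longleftrightarrow\; \Bigl(\bigsqcup_{b=0}^{d} S_{d,b}\Bigr) \times S_{n,d} \times S_{n-d,\,(k-d)/2} \;=\; \{0,1\}^d \times S_{n,d} \times S_{n-d,\,(k-d)/2},
\]
the crucial point being that $\bigsqcup_{b=0}^{d} S_{d,b} = \{0,1\}^d$, the full length-$d$ cube. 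Translating through the Peres recursion gives
\[
\Psi(S_d) \;=\; \{0,1\}^d \,*\, \Psi(S_{n,d}) \,*\, \Psi(S_{n-d,\,(k-d)/2})
\]
as multisets. The first factor is trivially extracting; the other two are extracting by the inductive hypothesis, because $n$ and $n-d$ are both strictly smaller than $2n$ for $n \ge 1$. Hence $\Psi(S_d)$ is extracting. Finally, $S_{2n,k}$ is the disjoint union of the $S_d$ over the valid values $d \equiv k \pmod{2}$, so $\Psi(S_{2n,k})$ is a multiset union of extracting multisets, and hence extracting, closing the induction.

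The part I expect to demand the most care is the bookkeeping in the even case: verifying that summing the Structure Lemma bijections over the free index $(a_{01},a_{10})$ with $a_{01}+a_{10}=d$ cleanly produces the $\{0,1\}^d$ factor as claimed, and that the small closure claim about elementwise concatenation of extracting multisets is applied correctly. Once those bookkeeping items are in place, the induction goes through without further complication.
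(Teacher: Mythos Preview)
Your proof is correct and follows the paper's own strategy: induction on the input length, with the even step driven by the Structure Lemma for the binarization tree~\eqref{eq:original-peres-tree}. The paper's version is much terser—it asserts that for an equiprobable $S$ one has $\Psi(S)=\Psi_1(S)*\Psi(u(S))*\Psi(v(S))$ with each factor extracting—but read literally with $S$ a single $4$-valued class $S_{(a_{00},a_{01},a_{10},a_{11})}$, the factor $\Psi_1(S)=S_{d,a_{10}}$ is \emph{not} an extracting multiset, exactly the obstacle you flag. Your grouping by $d=a_{01}+a_{10}$ is precisely what repairs this: once the free split $(a_{01},a_{10})$ is summed out, the $\Psi_1$-factor becomes the full cube $\{0,1\}^d$, while the $u$- and $v$-factors stay fixed at $S_{n,d}$ and $S_{n-d,(k-d)/2}$. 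So your argument is best read as a rigorous unpacking of the paper's proof rather than a genuinely different route; you also treat the odd-length case explicitly, which the paper omits.
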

\begin{proof}
Observe that, for an equiprobable set $S\subset\{0,1\}^{2n}$,
$\Psi(S)=\Psi_1(S)*\Psi(u(S))*\Psi(v(S))$.  This does not hold in general.
But, for equiprobable sets, we have one-to-one correspondence $\Phi$ given by
the binarization tree~\eqref{eq:original-peres-tree} and
$\Phi(S)=\Psi_1(S)\times u(S)\times v(S)$ by the structure lemma.  Consider a
function $\Psi'$ on $\{0,1\}^*\times\{0,1\}^*\times\{0,1\}^*$ defined by
$\Psi'(x,u,v)=x*\Psi(u)*\Psi(v)$.  For sets $A,B,$ and $C$, we have
$\Psi'(A\times B\times C)=A*\Psi(B)*\Psi(C)$.  Since $\Psi=\Psi'\circ\Phi$,
we conclude that $\Psi(S)=\Psi_1(S)*\Psi(u(S))*\Psi(v(S))$.

Note, here, that $u(S)$ and $v(S)$ are equiprobable.  Now, by the induction
on the length of strings, $\Psi(u(S))$ and $\Psi(v(S))$ are extracting.
Since $\Psi_1$ is extracting, so is $\Psi_1(S)$.  So, their concatenation
$\Psi(S)$ is extracting, and thus $\Psi$ is extracting.
\end{proof}

\newcommand\EX{\mathrm{E}} 
\begin{theorem}\label{thm:peres-optimal}
Peres function $\Psi$ is asymptotically optimal.
\end{theorem}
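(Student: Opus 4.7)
The plan is to combine the recursion~\eqref{eq:peres-def} with the Entropy Lemma applied to the Peres binarization tree~\eqref{eq:original-peres-tree}, so that the recursion for the expected output length mirrors the entropy decomposition term-by-term. The Peres tree over $X=Y^2$ has three internal nodes: its root yields the component function $u$ with $P=1$ and distribution $\dist{p^2+q^2,2pq}$; its left internal child yields $v$ with $P=p^2+q^2$ and distribution $\dist{p_v,1-p_v}$ where $p_v=p^2/(p^2+q^2)$; its right internal child yields $\Psi_1$ with $P=2pq$ and uniform distribution. Since $H(X)=2H(p)$, Lemma~\ref{lemma:entropy} reads
\[
2H(p) = H(2pq) + (p^2+q^2)\,H(p_v) + 2pq.
\]

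Next, let $R_n(p) = \EX\,|\Psi(Y_1\cdots Y_n)|$ for i.i.d.\ Bernoulli$(p)$ variables $Y_i$. Parsing an input of length $2n$ into $n$ pairs, $u$ applied coordinate-wise produces a length-$n$ Bernoulli$(2pq)$ sequence, $v$ produces a length-$L$ Bernoulli$(p_v)$ sequence with $L\sim\mathrm{Bin}(n,p^2+q^2)$, and the $\Psi_1$-output has length equal to the count of $01/10$ pairs (mean $2pqn$). Taking expectations in~\eqref{eq:peres-def} then gives
\[
R_{2n}(p) = 2pqn + R_n(2pq) + \EX_L[R_L(p_v)].
\]
The standard entropy bound $R_n(p)\le nH(p)$ (a consequence of $\Psi$ being extracting by Theorem~\ref{thm:peres-extracting}) yields $\bar r(p):=\limsup_n R_n(p)/n\le H(p)$; Fatou's lemma applied to the recursion, using $L/n\to p^2+q^2$ almost surely together with non-negativity of $R_L(p_v)$, gives for $\underline r(p):=\liminf_n R_n(p)/n$
\[
\underline r(p) \ge pq + \tfrac{1}{2}\,\underline r(2pq) + \tfrac{1}{2}(p^2+q^2)\,\underline r(p_v).
\]

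Setting $\delta(p):=H(p)-\underline r(p)\ge 0$ and subtracting the preceding inequality from the entropy identity yields $\delta(p)\le \tfrac{1}{2}\delta(2pq)+\tfrac{1}{2}(p^2+q^2)\delta(p_v)\le M(1-pq)$, where $M:=\sup_{p\in[0,1]}\delta(p)$. If $M>0$, picking $p_k$ with $\delta(p_k)\to M$ the bound $M(1-p_kq_k)\ge\delta(p_k)\to M$ forces $p_kq_k\to 0$, so along a subsequence $p_k\to 0$ or $p_k\to 1$; but $\delta(p_k)\le H(p_k)\to 0$ there, contradicting $\delta(p_k)\to M>0$. Hence $M=0$, so $\underline r=\bar r=H$ and $R_n(p)/n\to H(p)$. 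The hard part will be the Fatou step: cleanly interchanging the liminf with the expectation over the random length $L$ relies on concentration of $L/n$ and on $R_L(p_v)\ge 0$. The contraction argument that closes the proof is then purely algebraic and avoids any continuity assumption on $\underline r$, because $\delta$ is already controlled near the endpoints by $H$ itself, which vanishes there.
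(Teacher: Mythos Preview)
Your argument is correct, and it takes a genuinely different route from the paper's. The paper introduces the depth-truncated functions $\Psi_\nu$ and works with their \emph{limiting} rates $r_\nu(p)$, which satisfy the clean deterministic recursion~\eqref{eq:original-peres-rate-rec}; it then observes that the entropy identity~\eqref{eq:original-peres-entropy} makes $H$ a fixed point of the operator $T$ in~\eqref{eq:original-peres-op} and concludes $r_\nu\uparrow H$. By contrast, you work directly with $R_n(p)=\EX|\Psi(Y_1\cdots Y_n)|$ for the full $\Psi$, so your recursion carries the random length $L\sim\mathrm{Bin}(n,p^2+q^2)$ and you must pass to the limit via Fatou together with $L/n\to p^2+q^2$ a.s. The pay-off is that your endgame is fully explicit: the contraction bound $\delta(p)\le M(1-pq)$ with $\delta=H-\underline r$ forces any near-maximizing sequence to the boundary, where $\delta\le H\to 0$, giving $M=0$. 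The paper's approach buys a cleaner recursion with no probabilistic limit exchange, but its final line (``$H$ is a fixed point of $T$ and thus $\lim r_\nu=H$'') is exactly the step your contraction argument justifies; indeed, your last paragraph, applied verbatim to $\delta=H-\lim_\nu r_\nu$, would complete the paper's proof. Two small points worth making precise in a full write-up: (i) $R_{2n+1}=R_{2n}$ since odd inputs drop a bit, so $\liminf_n R_n/n=\liminf_n R_{2n}/(2n)$; (ii) the Fatou step uses that $L_n\to\infty$ a.s., so $\liminf_n R_{L_n}(p_v)/L_n\ge\underline r(p_v)$ along the random subsequence.
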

\begin{proof}
By the entropy lemma,
\[
H(Y^2)=2pqH(\Psi_1(Y^2))+H(u(Y^2))+(p^2+q^2)H(v(Y^2)).
\]
The nodes of our binarization tree have distributions
\begin{align*}
u(Y^2)&:\quad \dist{p^2+q^2,2pq},\\
v(Y^2)&:\quad      \dist{p^2/(p^2+q^2), q^2/(p^2+q^2)},\\
\Psi_1(Y^2)&:\quad \dist{\frac12,\frac12}.                              
\end{align*}
Since $H(Y^2)=2H(p)$ and $H(\Psi_1(Y^2))=1$, we have
\begin{equation}\label{eq:original-peres-entropy}
H(p)=pq+\frac12 H(p^2+q^2)+\frac12(p^2+q^2)H(p^2/(p^2+q^2)).
\end{equation}

Consider the truncated versions of Peres function, whose recursion depths are
bounded by $\nu$, defined as follows:
\begin{align*}
\Psi_\nu(x)&=\Psi_1(x)*\Psi_{\nu-1}(u(x))*\Psi_{\nu-1}(v(x)),\\
\Psi_0(x)&=\lambda.
\end{align*}
So the von Neumann function $\Psi_1$ has recursion depth 1, and if
$|x|\le 2^\nu$, then $\Psi(x)=\Psi_\nu(x)$.

The output rate $r_\nu(p)$, for the source distribution $\dist{p,q}$, of
$\Psi_\nu$ satisfies the recursion~(See
\cite{peres92,pae15,DBLP:journals/ipl/Pae13})
\begin{align}\label{eq:original-peres-rate-rec}
 r_\nu(p) = r_1(p) + \frac12 r_{\nu-1}(p^2+q^2) 
  + \frac12 (p^2+q^2) r_{\nu-1}(p^2/(p^2+q^2)),
\end{align}
Note, here, that $u(Y^2)$ and $v(Y^2)$ has distributions $\dist{p^2+q^2,2pq}$
and $\dist{p^2/(p^2+q^2), q^2/(p^2+q^2)}$, respectively, and $r_1(p)=pq$, and
$r_0(p)=0$.  Consider the operator $T$ on $\{f:[0,1]\to\bfR\mid
\lim_{t\to0}f(t)=\lim_{t\to1}f(t)=0\}$ defined by
\begin{equation}\label{eq:original-peres-op}
T(f)(p)=r_1(p)+ \frac12 f(p^2+q^2) 
  + \frac12 (p^2+q^2) f(p^2/(p^2+q^2)).
\end{equation}
Then $r_\nu(p)=T^{(\nu)}(r_0)(p)$ is increasing and bounded by $H(p)$.  By
\eqref{eq:original-peres-entropy}, $H(p)$ is a fixed point of $T$ and thus
$\lim_{\nu\to\infty}r_v(p)=H(p)$.
\end{proof}

In the rest of the paper, for each Peres-style recursion, we give a
binarization tree whose component functions define the von Neumann function
$\Psi_1$ and auxiliary functions $u_1,\dots,u_l$.  The resulting recursive
function $\Psi(x)=\Psi_1(x)*\Psi(u_1(x))*\dots*\Psi(u_l(x))$ is extracting
exactly in the same manner as in Theorem~\ref{thm:peres-extracting}; an
equiprobable set $S$ is sent to an extracting multiset $\Psi_1(S)$ and the
Cartesian product of equiprobable sets $u_1(S)\times\dots\times u_l(S)$ which
in turn becomes extracting multiset $\Psi(u_1(S))*\dots*\Psi(u_l(S))$ so
that $\Psi(S)$ is extracting.

For the asymptotical optimality, the operator $T$ is again defined by the
binarization tree to be
\begin{equation}\label{eq:T-def}
T(f)(\ap)=r_1(\ap)+ P(u_1) f(\pi(u_1))+\dots+ P(u_l) f(\pi(u_l)),
\end{equation}
where $r_1(\ap)$ is the output rate $P(\Psi_1)$ of the von Neumann function and
$P(u_i)$ is the probability for the node corresponding to the component
function $u_i$ and $\pi(u_i)$ is the node's branching probability
distribution.  In the same manner as in Theorem~\ref{thm:peres-optimal}, the
output rates of the truncated recursive functions give rise to a monotone
increasing sequence converging to the Shannon entropy because the entropy is
the fixed point of the operator $T$ by the entropy lemma since $T$ is defined
by the binarization tree.

\section{Peres-Style Recursive Algorithms}
Section~\ref{sec:three-face} and \ref{sec:four-face} present the binarization
trees for the recently-found generalizations of Peres algorithm
in~\cite{pae15}, and the rest of the paper discusses brand-new Peres-style
recursive algorithms.
\subsection{3-Face Peres Algorithm}\label{sec:three-face}
Suppose that we want to find a generalization of Peres algorithm that works
on a 3-faced source $Y$ with a distribution $\dist{p,q,r}$.  As in the original
Peres algorithm, we take two samples and use the obvious generalization of
von Neumann function which we use the same name $\Psi_1$.  We devise a
binarization tree with $3^2=9$ external nodes, whose component function
includes $\Psi_1$:
\[
\includegraphics[scale=1]{figs-121.mps}
\]
Verify that the component functions are the same as the auxiliary functions
of the 3-faced Peres function given in~\cite{pae15}, except that the
functions $\Psi_{11}$, $\Psi_{12}$ and $\Psi_{13}$ with disjoint supports are
combined, which we denote by
$\Psi_1=\Psi_{11}\oplus \Psi_{12}\oplus \Psi_{13}$:
\begin{center}
\begin{tabular}{c|c|c|c|c|c}
\hline
$x$ & $\Pr(x)$ & $\Psi_1(x)$ & $u(x)$ & $v(x)$ & $w(x)$\\
\hline
00& $p^2$ & $\lambda$ & 0 & 0 & $\lambda$\\
01& $pq$ & 0 & 1 & $\lambda$ & 1\\
02& $pr$ & 0 & 1 & $\lambda$ & 2\\
10& $pq$ & 1 & 1 & $\lambda$ & 1\\
11& $q^2$ & $\lambda$ & 0 & 1 & $\lambda$\\
12& $qr$ & 0 & 1 & $\lambda$ & 0\\
20& $pr$ & 1 & 1 & $\lambda$ & 2\\
21& $qr$ & 1 & 1 & $\lambda$ & 0\\
22& $r^2$ & $\lambda$ & 0 & 2 & $\lambda$\\
\hline
\end{tabular}
\end{center}
Since $\Psi_{11}$, $\Psi_{12}$ and $\Psi_{13}$ are extracting and have
disjoint supports, $\Psi_1(x)=\Psi_{11}(x)*\Psi_{12}(x)*\Psi_{13}(x)$, and thus
$\Psi_1$ is extracting.  Then the resulting recursive function
$\Psi(x)=\Psi_1(x)*\Psi(u(x))*\Psi(v(x))*\Psi(w(x))$ is extracting and
asymptotically optimal, where the entropy bound is $H(p,q,r)$.

\subsection{4-Face Peres Algorithm}\label{sec:four-face}
A 4-Face Peres function is given in \cite{pae15}, and it is defined by the
following binarization tree of $4^2=16$ external nodes:
\[
\includegraphics[scale=1]{figs-13.mps}
\]
whose component functions are as follows, where
$\Psi_1=\Psi_{11}\oplus\Psi_{12}\oplus\dots\Psi_{16}$, again, as in
\cite{pae15}:
\begin{center}
\begin{tabular}{c|c|c|c|c|c}
\hline
$x$ & $\Psi_1(x)$ & $u(x)$ & $v(x)$ & 
   $w_1(x)$ & $w_2(x)$ \\
\hline
00& $\lambda$ & 0 & 0 & $\lambda$ & $\lambda$\\
01& 0 & 1 & $\lambda$ & 0 & $\lambda$ \\
02& 0 & 1 & $\lambda$ & 1 & $\lambda$ \\
03& 0 & 1 & $\lambda$ & 2 & $\lambda$ \\
10& 1 & 1 & $\lambda$ & 0 & $\lambda$ \\
11& $\lambda$ & 0 & 1 & $\lambda$ & $\lambda$\\
12& 0 & 1 & $\lambda$ & 3 & $\lambda$ \\
13& 0 & 2 & $\lambda$ & $\lambda$ & 0 \\
20& 1 & 1 & $\lambda$ & 1 & $\lambda$ \\
21& 1 & 1 & $\lambda$ & 3 & $\lambda$ \\
22& $\lambda$ & 0 & 2 & $\lambda$ & $\lambda$\\
23& 0 & 2 & $\lambda$ & $\lambda$ & 1 \\
30& 1 & 1 & $\lambda$ & 2 & $\lambda$ \\
31& 1 & 2 & $\lambda$ & $\lambda$ & 0 \\
32& 1 & 2 & $\lambda$ & $\lambda$ & 1 \\
33& $\lambda$ & 0 & 3 & $\lambda$ & $\lambda$\\
\hline
\end{tabular}
\end{center}

Alternatively, consider, for example,
\[
\includegraphics[scale=1]{figs-131.mps}
\]
and the corresponding component functions are as follows:
\begin{center}
\begin{tabular}{c|c|c|c|c|c}
\hline
$x$ & $\Psi_1(x)$ & $u(x)$ & $v(x)$ & 
   $w_1(x)$ & $w_2(x)$ \\
\hline
00& $\lambda$ & 0 & 0 & $\lambda$ & $\lambda$\\
01& 0 & 1 & $\lambda$ &  0 & $\lambda$ \\
02& 0 & 1 & $\lambda$ &  1 & $\lambda$ \\
03& 0 & 1 & $\lambda$ &  2 & $\lambda$ \\
10& 1 & 1 & $\lambda$ &  0 & $\lambda$ \\
11& $\lambda$ & 0 & 1 & $\lambda$ & $\lambda$\\
12& 0 & 1 & $\lambda$ & 3 & $\lambda$ \\
13& 0 & 1 & $\lambda$ & $\lambda$ & 0 \\
20& 1 & 1 & $\lambda$ & 1 & $\lambda$ \\
21& 1 & 1 & $\lambda$ & 3 & $\lambda$ \\
22& $\lambda$ & 0 & 2 & $\lambda$ & $\lambda$\\
23& 0 & 1 & $\lambda$ & $\lambda$ & 1 \\
30& 1 & 1 & $\lambda$ & 2 & $\lambda$ \\
31& 1 & 1 & $\lambda$ & $\lambda$ & 0 \\
32& 1 & 1 & $\lambda$ & $\lambda$ & 1 \\
33& $\lambda$ & 0 & 3 & $\lambda$ & $\lambda$\\
\hline
\end{tabular}
\end{center}

\subsection{3-bit Peres Algorithm}
Now consider a brand-new situation in which $m=2$ but the component functions
are defined on 3 bits instead of 2 bits as with all the examples given above:
\begin{equation}\label{eq:three-bit-peres}
\raisebox{-.5\height}{\includegraphics[scale=1]{figs-14.mps}}
\end{equation}

\[
\begin{tabular}{c|c|c|c|c|c|c|c}
\hline
$x$&$\Pr(x)$&$u(x)$&$v(x)$&$v_1(x)$&$v_2(x)$&$\Psi_1(x)$&$w(x)$\\
\hline
000&$p^3$&0&0&0&$\lambda$&                 $\lambda$&$\lambda$  \\
001&$p^2q$&1&$\lambda$&$\lambda$&$\lambda$& 0&0  \\
010&$p^2q$&1&$\lambda$&$\lambda$&$\lambda$& 1&0  \\
011&$pq^2$&1&$\lambda$&$\lambda$&$\lambda$& 0&1  \\
100&$p^2q$&0&1&$\lambda$&0&                 $\lambda$&$\lambda$   \\
101&$pq^2$&1&$\lambda$&$\lambda$&$\lambda$& 1&1  \\
110&$pq^2$&0&1&$\lambda$&1&                 $\lambda$&$\lambda$  \\
111&$q^3$&0&0&1&$\lambda$&                 $\lambda$&$\lambda$   \\
\hline
\end{tabular}
\]
The three-bit von Neumann function $\Psi_1=\Psi_{11}\oplus\Psi_{12}$ does not
utilize inputs 100 and 110, and the output rate $2(p+q)pq/3=2pq/3$ is strictly
smaller than $pq$ of the two-bit case.  Therefore, even though 3-bit Peres
algorithm is asymptotically optimal, the convergence to the entropy bound
must be slower.

\subsection{4-bit Peres Algorithm}
If we ever wanted to have a 4-bit Peres function in this fashion, then can we
use $E_4$, the Elias function of input size 4 as the base of the recursion?
Note, in the three-bit case, $\Psi_1$ of \eqref{eq:three-bit-peres} is
actually $E_3$.  With $E_4$, among 16 inputs, only 2 inputs, 0000 and
1111, are wasted.  Consider the following binarization tree:
\begin{equation}\label{eq:four-bit-peres}
\raisebox{-.5\height}{\includegraphics[scale=1]{figs-15.mps}}
\vspace{.5cm}
\end{equation}
So we have the following recursion:
\[
\Psi(x)=E_4(x)*\Psi(u(x))*\Psi(v(x))*\Psi(w(x))*\Psi(w_1(x))*\Psi(w_2(x))
\]
The rate of $E_4$ is 
\begin{align*}
\frac14\left(2\cdot4p^3q+(2\cdot4+2)p^2q^2+2\cdot4pq^3\right)&=\frac14pq(8p^2+10pq+8q^2)\\
&=pq(1+p^2+q^2+\frac12pq)\\
&>1.65\cdot pq.
\end{align*}
However, it seems that the convergence is slower than the original Peres
function.  For a fair comparison, we need to see how the original Peres
function on $\{0,1\}^{4n}$.  Consider, for $x\in\{0,1\}^4$,
\[
\Psi^2(x)=\Psi_1(x)*\Psi_1(u(x))*\Psi_1(v(x))*\Psi^2(uu(x))*\Psi^2(vu(x))*\Psi^2(uv(x))*\Psi^2(vv(x)).
\]
The output rate of base part of this recursion is
\begin{align*}
pq+pq(p^2+q^2)+\frac12p^2q^2/(p^2+q^2)&=pq(1+p^2+q^2+\frac12pq/(p^2+q^2))\\
 &>pq(1+p^2+q^2+\frac12pq).
\end{align*}
So the comparison favors the original Peres function.
The following plot compares this rate with that of $E_4$, where the red one
being the rate of $\Psi^2$:
\[
\includegraphics[scale=0.6]{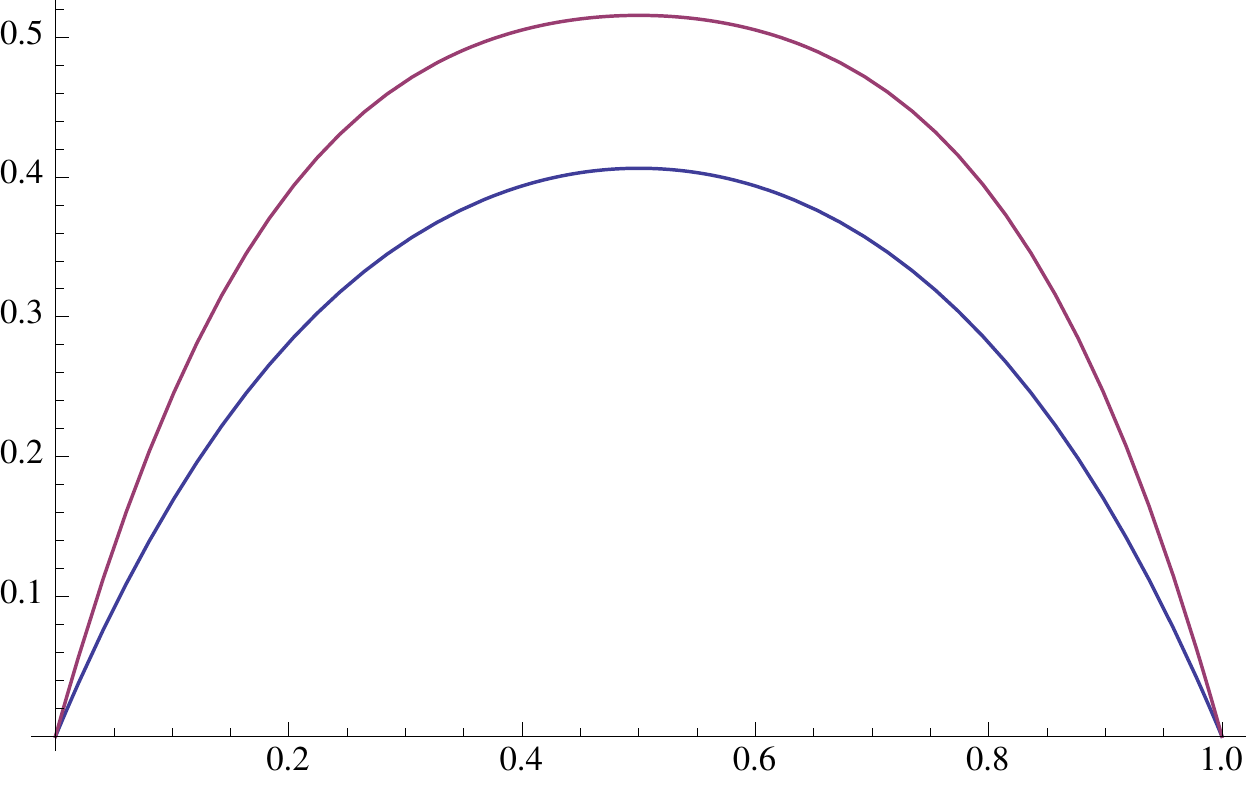}
\]

\subsection{Dijkstra's roulette}
Dijkstra's one-page paper~\cite{dijkstra90} describes an interesting
algorithm that simulates a fair roulette from a biased coin: suppose $m$ is a
prime; take $m$ flips of the coin, encoded as a binary string $x$ in
$\{0,1\}^m$.  If $x=0\dots0$ or $x=1\dots1$, then try again, otherwise,
output $y$ the number of cyclic shifts to obtain the lexicographic minimum.
The virtues of this scheme are, as with Peres algorithm, simplicity and
efficiency, although output rate is much lower than, for example, Elias's
algorithm for $m$-faced dice using a coin, in which case again asymptotically
optimal with output rate approaching $H_m(p)$, the Shannon entropy with base
$m$.  Using Dijkstra's scheme together with Peres-style recursion, we can
reach out the both side of virtues.

Consider a simple case of $m=3$ with a biased coin as a source.  The
Dijkstra's scheme enhanced with Peres-style recursion is described by the
following binarization tree:
\[
\includegraphics[scale=1]{figs-18.mps}
\]

\begin{center}
\begin{tabular}{c|c|c|c|c|c}
\hline
$x$ & $\Pr(x)$ & $\Psi_1(x)$ & $u(x)$ & $v(x)$ & $w(x)$\\
\hline
000& $p^3$ & $\lambda$ & 0 & 0 & $\lambda$\\
001& $p^2q$ & 0 & 1 & $\lambda$ & 0\\
010& $p^2q$ & 1 & 1 & $\lambda$ & 0\\
011& $pq^2$ & 0 & 1 & $\lambda$ & 1\\
100& $p^2q$ & 2 & 1 & $\lambda$ & 0\\
101& $pq^2$ & 2 & 1 & $\lambda$ & 1\\
110& $p^2q$ & 1 & 1 & $\lambda$ & 1\\
111& $q^3$ & $\lambda$ & 0 & 1 & $\lambda$\\
\hline
\end{tabular}
\end{center}
Note, here, that the base function has three branches while auxiliary
functions are binary because we use a coin as the source and outputs
are to be three-valued.  The resulting recursion
$\Psi(x)=\Psi_1(x)*\Psi(u(x))*\Psi(v(x))*\Psi(w(x))$ outputs a uniform
three-valued random numbers with an output rate that approaches $H_3(p)$,
the Shannon entropy with base 3, as the input size tends to infinity.

When $m=5$, consider the following binarization tree:
\begin{equation}\label{eq:tree-m-five}
\raisebox{-.5\height}{\includegraphics[scale=1]{figs-20.mps}}
\end{equation}
where $\Psi_{11}$,\dots,$\Psi_{16}$ are five-valued extracting functions,
which gives the base function $\Psi_1=\Psi_{11}\oplus\dots\oplus\Psi_{16}$.

Now, as $m$ increases, as we can see already in $m=5$ case, the corresponding
binarization tree grows complicated so much that the advantage of the
simplicity disappears quickly.  Note that, in \eqref{eq:tree-m-five}, the
supports of the functions $\Psi_{11},\dots,\Psi_{16}$ are exactly the orbits
with respect to the group action by rotation on
$\{0,1\}^5$~\cite{DBLP:books/daglib/0070572}.  For a prime number $m$, there
are $(2^m-2)/m$ such orbits, and Dijkstra's algorithm is based on this
property.  So the height of the binarization tree grows almost linearly and
the number of nodes exponentially.  However, observe that the subtree rooted
at $w$ in \eqref{eq:tree-m-five} can be regarded as a binary search tree
whose search keys are $\Psi_{11},\dots,\Psi_{16}$.  So we can make a
compromise and keep only the nodes with high probability of output.

For example, for $m=11$, consider the following binarization tree:
\[
\includegraphics[scale=1]{figs-21.mps}
\]
Here, we partition the orbits into four sets $S_1,\dots,S_4$ appropriately,
for example, by the number of 1's.  Then, auxiliary functions $w$, $w_1$ and
$w_2$ are easily computed by counting the number of 1's in the input
$x\in\{0,1\}^{11}$.  The base function $\Psi_1$ is computed using the
original Dijkstra algorithm.  The corresponding Peres-style recursion is
\[
\Psi(x)=\Psi_1(x)*\Psi(u(x))*\Psi(v(x))*\Psi(w_1(x))*\Psi(w_2(x))
\]
is still extracting but not asymptotically optimal.



\section*{Appendix: The Proof of Structure Lemma}
Given a binarization tree, let $T$ be a subtree and $X_T$ be the restriction
of $X$ on the leaf set of $T$.  The leaf entropy theorem is proved by
induction using the following recursion,
\begin{equation}\label{eq:leaf-entropy-rec}
H(X_T)=\begin{cases}
0,& \text{if $T$ is a leaf,}\\
H(\pi)+\pi_0H(X_{T_0})+\dots+\pi_{d-1}H(X_{T_{d-1}}), &\text{otherwise,}
\end{cases}
\end{equation}
where, for nonempty $T$ whose root $v$ has degree $d$, $T_0,\dots,T_{d-1}$ are
the subtrees of $v$ and $\pi=\dist{\pi_0,\dots,\pi_{d-1}}$ is the branching
distribution of $v$.  The structure lemma holds for a similar reason.
\begin{proof}[Proof of Structure Lemma]
For an equiprobable subset $S=S_{(n_0,\dots,n_{m-1})}$ and a subtree $T$ of
the given binarization tree, let $S_T$ be the restriction of $S$ on the leaf
set of $T$.  Then we have a similar recursion
\begin{equation}\label{eq:structure-rec}
S_T\cong\begin{cases}
    \{0\}, & \text{if $T$ is a leaf,}\\
    S_{(l_0,\dots,l_{d-1})}\times S_{T_0}\times\dots\times S_{T_{d-1}},
    &\text{otherwise,}
  \end{cases}
\end{equation}
where, for nonempty $T$ and its root $v$, $T_0,\dots,T_{d-1}$ are the
subtrees of $v$ and
\[
l_0=\sum_{\phi_v(i)=0}n_i,\,\dots,\, l_{d-1}=\sum_{\phi_v(i)=d-1}n_i,
\]
so that $\phi_v(S)=S_{(l_0,\dots,l_{d-1})}$.

First, if $T$ is a leaf with label $i$, then $S_T$ is a singleton set that
consists of a single string of $n_i$ $i$'s, hence the first part of
\eqref{eq:structure-rec}.  When $T$ is nonempty, the correspondence
$S_T\to S_{(l_0,\dots,l_{d-1})}\times S_{T_0}\times\dots\times S_{T_{d-1}}$
is given by $x\mapsto (\phi_v(x),x_{T_0},\dots,x_{T_{d-1}})$, where
$x_{T_0},\dots,x_{T_{d-1}}$ are restrictions of $x$.  This correspondence is
one-to-one because $\phi_v(x)$ encodes the branching with which $x$ is
recovered from $(x_{T_0},\dots,x_{T_{d-1}})$, giving an inverse mapping
$S_{(l_0,\dots,l_{d-1})}\times S_{T_0}\times\dots\times
S_{T_{d-1}}\to S_T$.  For example, consider tree
\eqref{eq:example-tree} and suppose that $T$ is the subtree rooted at the
node {\it 3}.  For $x=207643590289787$, the following shows the restrictions of
$x$ and $\Phi_i(x)$'s.
\[
\medskip
\raisebox{-.5\height}{\includegraphics[scale=1]{figs-310.mps}}
\]
By taking symbols one by one from $x_{T_0}=0490898$, $x_{T_1}=777$, and
$x_{T_2}=3$, according to $\Phi_3(x)=01020000101=(b_i)_{i=1}^{11}$, if
$b_i=j$, from $x_{T_j}$, we recover $x_T=07439890787$.

Induction on subtrees proves the lemma.
\end{proof}


\end{document}